\newtheorem{corollary}{Corollary}[section]
\newtheorem{lemma}{Lemma}[section]
\newcommand{\be}{\begin{equation}}
\newcommand{\ee}{\end{equation}}
\newcommand{\bd}{\begin{displaymath}}
\newcommand{\ed}{\end{displaymath}}
\newcommand{\bea}{\begin{eqnarray}}
\newcommand{\eea}{\end{eqnarray}}
\newcommand{\Hi} {{\cal H}^\infty}
\title{\LARGE \bf Remarks on Strong Stabilization and Stable $\Hi$ Controller Design$^{*}$\thanks{$^*$ This work was supported
 in part by the National Science Foundation under grant ANI-0073725, and by the
European Commission under contract no. MIRG-CT-2004-006666.}}
\author{Suat G\"um\"u\c{s}soy$^{\dag}$\thanks{$^{\dag}$ was with Dept. of Electrical and Computer Eng., The Ohio State
University, Columbus, OH 43210, U.S.A.; current affiliation: MIKES
Inc., Akyurt, ANKARA TR-06750, Turkey, {\sl
suat.gumussoy@mikes.com.tr}} ~~ and~~  Hitay
\"Ozbay$^{\ddag}$\thanks{$^{\ddag}$ Dept. of Electrical and
Electronics Eng., Bilkent University, Bilkent, Ankara TR-06800,
Turkey, on leave from Dept. of Electrical and Computer Eng., Ohio
State University, Columbus,
OH 43210, {\sl hitay@bilkent.edu.tr, ozbay.1@osu.edu}}}%
\date{}
\begin{document}
\maketitle
\begin{abstract}
A state space based design method is given to find strongly
stabilizing controllers for multi-input-multi-output plants
(MIMO). A sufficient condition is derived for the existence of
suboptimal stable $\Hi$ controller in terms of linear matrix
inequalities (LMI) and the controller order is twice that of the
plant. A new parameterization of strongly stabilizing controllers
is determined using linear fractional transformations (LFT).
\end{abstract}

\section{Introduction}
\setcounter{equation}{0}

Strong stabilization problem is known as the design of a stable
feedback controller which stabilizes the given plant. For
practical reasons, a stable controller is desired
\cite{V-CSS-85,ZO-ACC-97}. In this paper, we derive a simple and
effective design method to find stable $\Hi$ controllers for MIMO
systems.

A stable controller can be designed if and only if the plant
satisfies the parity interlacing property (PIP) \cite{YBL-AUT-74}
i.e., the plant has even number of poles between any pair of its
zeros on the extended positive real axis. There are several design
procedure for strongly stabilizing controllers, [4--16].

The result in this paper is the generalization of the work in
\cite{ZO-AUT-00} using LMIs. The procedure is quite simple,
efficient and easy to solve by using the LMI Toolbox of MATLAB
\cite{GNLC-LMI-95}. In the next section, it is shown that if a
certain LMI has a feasible solution, then it is possible to obtain
a stable $\Hi$ controller whose order is twice the order of the
plant. Moreover, a parameterization of strongly stabilizing
controllers can be given in terms of LFT.

The paper is organized as follows. The main results are given in
Section 2. Stable $\Hi$ controller design procedure is proposed in
Section 3. Numerical examples and comparison with other methods
can be found in Section 4 and concluding remarks are made in the
last section.

\textbf{Notation}\\
The notation is fairly standard. A state space realization of a
transfer function, $G(s)=C(sI-A)^{-1}B+D$, is shown by $
G(s)=\left[\begin{array}{c|c}
  A & B \\
  \hline
  C & D \\
\end{array}\right]$ and the linear fractional transformation of $G$ by $K$ is
denoted by $\mathcal{F}_l(G,K)$ which is equivalent to
$G_{11}+G_{12}K(I-G_{22}K)^{-1}G_{21}$ where $G$ is partitioned as
$G=\left[\begin{array}{cc}
  G_{11} & G_{12} \\
  G_{21} & G_{22} \\
\end{array}\right]$. As a shorthand  notation for LMI expressions, we will
define $\Gamma(A,B):=B^T A^T+A B$ where $A$, $B$ are matrices with
compatible dimensions.

\section{Strong stabilization of MIMO systems}
Consider the standard feedback system with generalized plant, $G$,
which has state space realization, \be \label{eq:plant}
G(s)=\left[\begin{array}{c|cc}
  A & B_{1} & B \\
  \hline
  C_{1} & D_{11} & D_{12} \\
  C & D_{21} & 0 \\
\end{array}\right]
\ee where $A\in\mathcal{R}^{n\times n}$,
$D_{12}\in\mathcal{R}^{p_1\times m_2}$,
$D_{21}\in\mathcal{R}^{p_2\times m_1}$ and other matrices are
compatible with each other. We suppose the plant satisfies the
standard assumptions,
\begin{enumerate}
    \item[A.1] $(A,B)$ is stabilizable and $(C,A)$ is detectable,
    \item[A.2] $\left[\begin{array}{cc}
  A-\lambda I & B \\
  C_1 & D_{12} \\
\end{array}\right]$ has full column rank for all $Re\{{\lambda}\}\geq
0$,
    \item[A.3] $\left[\begin{array}{cc}
  A-\lambda I & B_1 \\
  C & D_{21} \\
\end{array}\right]$ has full row rank for all $Re\{{\lambda}\}\geq
0$,
    \item[A.4] A has no eigenvalues on the imaginary axis.
\end{enumerate}
Let the controller has state space realization,
$K_G(s)=\left[\begin{array}{c|c}
  A_K & B_K \\
  \hline
  C_K & 0 \\
\end{array}\right]$ where $A_K\in\mathcal{R}^{n\times n}$, $B_K\in\mathcal{R}^{n\times p_2}$ and
$C_K\in\mathcal{R}^{m_2\times n}$. Define the matrix
$X\in\mathcal{R}^{n\times n}$, $X=X^T>0$ as the stabilizing
solution of \be \label{eq:riccati} A^T X+X A-XBB^TX=0 \ee (i.e.,
$A-BB^TX$ is stable) and the ``A-matrix" of the closed loop system
as $A_{CL}=\left[\begin{array}{cc}
A & B C_K \\
B_KC & A_K\\
\end{array}\right]$. Note that since $(A,B)$ is stabilizable, $X$  is unique and $A_X:=(A-BB^TX)$ is stable.
Also, the closed loop stability is equivalent to whether $A_{CL}$
is stable or not.

\begin{lemma} \label{lemma:LMI} Assume that the plant (\ref{eq:plant}) satisfies the
assumptions $A.1-A.4$. There exists a stable stabilizing
controller, $K_G\in\mathcal{RH}^\infty$ if there exists
$X_K\in\mathcal{R}^{n\times n}$, $X_K=X_K^T>0$ and
$Z\in\mathcal{R}^{n\times p_2}$ for some $\gamma_K>0$ satisfying
the LMIs, \be \label{eq:LMI1} \Gamma(X_K,A)+\Gamma(Z,C)<0, \ee \be
\label{eq:LMI2} \left[\begin{array}{ccc}
  \Gamma(X_K,A_X)+\Gamma(Z,C) & -Z & -X B \\
  -Z^T & -\gamma_K I & 0 \\
  -B^T X & 0 & -\gamma_K I \\
\end{array}\right] <0,
\ee
where $X$ is the stabilizing solution of (\ref{eq:riccati}) and
$A_X$ is as defined previously. Moreover, under the above
condition, a stable controller can be given as
$K_G(s)=\left[\begin{array}{c|c}
  A_X+X_K^{-1}ZC & -X_K^{-1}Z \\
  \hline
  -B^TX & 0 \\
\end{array}\right]$ and this controller satisfies $\|K_G\|_\infty<\gamma_K$.
\end{lemma}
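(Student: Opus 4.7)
The plan is to verify separately (i) that $K_G \in \RHi$, i.e.\ $A_K := A_X + X_K^{-1}ZC$ is Hurwitz; (ii) that the closed-loop matrix $A_{CL}$ is Hurwitz; and (iii) that $\|K_G\|_\infty < \gamma_K$. A single symmetric positive definite matrix $X_K$ will carry the whole argument as a Lyapunov/storage certificate, while the Riccati data $(X, A_X, -B^T X)$ enters only through the ``state feedback'' part of the controller.

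I would begin with closed-loop stability, which is where LMI (\ref{eq:LMI1}) enters. Plugging the proposed controller data into $A_{CL}$ yields
\[
A_{CL} = \left[\begin{array}{cc} A & -BB^TX \\ -X_K^{-1}ZC & A_X + X_K^{-1}ZC \end{array}\right].
\]
Applying the spectrum-preserving similarity $T = \left[\begin{array}{cc} I & 0 \\ I & I \end{array}\right]$ and using the Riccati identity $A_X + BB^T X = A$, the off-diagonal blocks rearrange so that $T^{-1}A_{CL}T$ becomes upper block triangular with diagonal blocks $A_X$ and $A + X_K^{-1}ZC$. The first is Hurwitz because $X>0$ is the stabilizing solution of (\ref{eq:riccati}). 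For the second, a direct expansion shows $\Gamma(X_K,A)+\Gamma(Z,C) = X_K(A+X_K^{-1}ZC) + (A+X_K^{-1}ZC)^T X_K$, so LMI (\ref{eq:LMI1}) is exactly the Lyapunov inequality certifying that $A + X_K^{-1}ZC$ is Hurwitz.

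For (i) and (iii), I would recognize LMI (\ref{eq:LMI2}) as the Bounded Real Lemma certificate for $K_G$ with storage $P = X_K$. Direct substitution of $A_K$, $B_K = -X_K^{-1}Z$, $C_K = -B^TX$, $D_K = 0$ into the standard $3\times 3$ BRL block inequality produces $(1,1)$ block $X_K A_K + A_K^T X_K = \Gamma(X_K,A_X)+\Gamma(Z,C)$ (same algebra as above with $A$ replaced by $A_X$), $(1,2)$ block $X_K B_K = -Z$, and $(1,3)$ block $C_K^T = -XB$ (using $X = X^T$); this reproduces (\ref{eq:LMI2}) verbatim. Feasibility therefore simultaneously yields $\|K_G\|_\infty < \gamma_K$ and, by considering its $(1,1)$ principal submatrix, internal stability of $A_K$.

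The only non-mechanical step is spotting the decoupling similarity $T$; once one uses $A = A_X + BB^T X$ to cancel off-diagonal contributions, both LMIs reduce to standard Lyapunov/BRL patterns with no conservative Schur step. This also explains the need for both conditions: (\ref{eq:LMI1}) governs the ``$A$-block'' of the triangularized closed loop, while (\ref{eq:LMI2}) independently governs the ``$A_K$-block'' and enforces the $H^\infty$ bound on the controller.
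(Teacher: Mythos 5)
Your proposal is correct and follows essentially the same route as the paper: the same decoupling similarity transformation reducing $A_{CL}$ to block-triangular form with diagonal blocks $A_X$ and $A+X_K^{-1}ZC$ (the latter certified Hurwitz by LMI (\ref{eq:LMI1}) as a Lyapunov inequality), and the identification of LMI (\ref{eq:LMI2}) as the KYP/bounded-real certificate for $K_G$ with storage matrix $X_K$. You merely spell out the substitutions that the paper leaves implicit, including the useful observation that the $(1,1)$ block of (\ref{eq:LMI2}) also certifies stability of the controller's own $A$-matrix.
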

\begin{proof} By using similarity transformation, one
can show that $A_{CL}$ is stable if and only if $A_X$ and
$A_Z:=A+X_K^{-1}ZC$ is stable. Since $X$ is a stabilizing
solution, $A_X$ is stable. If we rewrite the LMI (\ref{eq:LMI1})
as \bd (A+X_K^{-1}ZC)^TX_K+X_K(A+X_K^{-1}ZC)<0,\ed it can be seen
that $A_Z$ is stable since $X_K>0$. The second LMI (\ref{eq:LMI2})
comes from KYP lemma and guarantees that
$\|K_G\|_\infty<\gamma_K$.
\end{proof}
\noindent \textbf{Remark 1} If the design only requires the
stability of closed loop system, it is enough to satisfy the LMI
(\ref{eq:LMI1}), $(1,1)$ block of (\ref{eq:LMI2}), i.e., \be
\label{eq:LMI(1,1)} A_X^T X_K+X_K A_X +C^TZ^T+Z C<0 \ee and the
controller has same
structure as above.\\
\noindent \textbf{Remark 2} The Lemma (\ref{lemma:LMI}) is
generalization of Theorem 2.1 in \cite{ZO-AUT-00}. If the
algebraic riccati equation (ARE) $(7)$ in \cite{ZO-AUT-00} has a
stabilizing solution, $Y=Y^T\geq0$, then there exists a stable
controller in the form, $\left[\begin{array}{c|c}
  A_X-\gamma_K^2YC^TC & \gamma_K^2YC^T \\
  \hline
  -B^TX & 0 \\
  \end{array}\right]$. This structure is the special case of the
  LMIs (\ref{eq:LMI1}) and (\ref{eq:LMI2}) when $X_K=(\gamma_K
Y)^{-1}$ and $Z=-\gamma_K C^T$. Note that our formulation does not
assume special structure on $Z$. Also in \cite{ZO-AUT-00}, the
stability of $A_Z$ is guaranteed by the same riccati equation, we
satisfy the stability condition of $A_Z$ with another LMI
(\ref{eq:LMI1}) which is less restrictive. Therefore, the Lemma
(\ref{lemma:LMI}) is less conservative as will be demonstrated in
examples. \\

\begin{corollary}
Assume that the sufficient condition (\ref{eq:LMI1}) and
(\ref{eq:LMI(1,1)}) holds. Then all controllers in the set \bd
\mathrm{K}_{G,ss}:=\{K=\mathcal{F}_l(K_{G,ss}^0,Q):
Q\in\mathcal{RH}^\infty, \|Q\|_\infty<\gamma_Q\} \ed are strongly
stabilizing where \be K_{G,ss}^0(s)=\left[\begin{array}{c|cc}
  A_X+X_K^{-1}ZC & -X_K^{-1}Z & B \\
  \hline
  -B^TX & 0 & I \\
  -C & I & 0 \\
\end{array}\right].
\ee and
$\gamma_Q=\left(\|C(sI-(A_X+X_K^{-1}ZC))^{-1}B\|_\infty\right)^{-1}$.
\end{corollary}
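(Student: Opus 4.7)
The plan is to verify, for every $Q\in\mathcal{RH}^\infty$ with $\|Q\|_\infty<\gamma_Q$, that $K:=\mathcal{F}_l(K_{G,ss}^0,Q)$ belongs to $\mathcal{RH}^\infty$ and that $K$ internally stabilizes $G$. I would handle the first by a small gain argument. A direct read-off of the state realization gives $K_{G,ss}^{0,22}(s)=-C(sI-(A_X+X_K^{-1}ZC))^{-1}B$, so by construction $\gamma_Q=1/\|K_{G,ss}^{0,22}\|_\infty$ and $\|Q\|_\infty<\gamma_Q$ is equivalent to $\|K_{G,ss}^{0,22}Q\|_\infty<1$. Condition (\ref{eq:LMI(1,1)}) rewrites as $A_K^TX_K+X_KA_K<0$ with $A_K:=A_X+X_K^{-1}ZC$ (the same rearrangement used in the proof of Lemma~\ref{lemma:LMI}), so $X_K>0$ forces $A_K$ stable and every block of $K_{G,ss}^0$ lies in $\mathcal{RH}^\infty$. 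The small gain theorem then yields $(I-K_{G,ss}^{0,22}Q)^{-1}\in\mathcal{RH}^\infty$, whence $K\in\mathcal{RH}^\infty$ follows by composition.

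For plant stabilization I would write out the closed-loop state equations in the variables $(x,x_K,x_Q)$ of $G$, $K_{G,ss}^0$ and $Q$ and then change coordinates to $(x,e,x_Q)$ with $e:=x-x_K$. The key observation is that the input channel through which $v=Qw$ enters $\dot x_K$ has matrix $B_K^{(2)}=B$ --- identical to the plant input matrix --- while $K_{G,ss}^0$ delivers $u=-B^TXx_K+v$ to the plant, so the two copies of $v$ cancel in $\dot e=\dot x-\dot x_K$. A direct computation then produces $\dot e=A_Z e$ with $A_Z:=A+X_K^{-1}ZC$. Since $w=-Cx_K+Cx=Ce$, the $x_Q$-dynamics depend only on $e$, and in the ordering $(e,x_Q,x)$ the closed-loop A-matrix becomes block lower triangular with diagonal blocks $A_Z$, $A_Q$ and $A_X$, stable respectively by LMI (\ref{eq:LMI1}), by $Q\in\mathcal{RH}^\infty$, and by the stabilizing property of the solution of (\ref{eq:riccati}).

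The main obstacle will be spotting the coordinate change $e=x-x_K$ and verifying the $v$-cancellation; both rest on the deliberate pairing of $B$ and $-B^TX$ in the extra ports of $K_{G,ss}^0$. Once these are in hand the block-triangular structure makes internal stability transparent and, as a by-product, reveals that plant stabilization in fact holds for \emph{every} stable $Q$ --- the bound $\|Q\|_\infty<\gamma_Q$ is needed only to keep the controller $K$ itself stable.
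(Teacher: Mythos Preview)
Your argument is correct. The paper's own proof is a single sentence: it invokes the standard observer-based parameterization of all stabilizing controllers from \cite{ZDG-ROC-96} and stops there. In other words, the paper simply recognizes $K_{G,ss}^0$ as the Youla central controller built from the state-feedback gain $F=-B^TX$ and observer gain $L=X_K^{-1}Z$, so that $\mathcal{F}_l(K_{G,ss}^0,Q)$ stabilizes $G$ for \emph{every} $Q\in\mathcal{RH}^\infty$; the norm bound $\|Q\|_\infty<\gamma_Q$ then enters only to force controller stability via small gain, exactly as you identified.

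Your route differs in that you do not cite the parameterization but instead verify closed-loop stability by hand through the coordinate change $e=x-x_K$ and the resulting block-triangular structure with diagonal blocks $A_Z$, $A_Q$, $A_X$. This is precisely a direct proof of the special case of the Youla parameterization at hand, and your final remark---that plant stabilization requires only $Q\in\mathcal{RH}^\infty$---recovers the textbook statement. The paper's approach is shorter because it delegates that work to \cite{ZDG-ROC-96}; yours is self-contained and makes the separation structure (and the role of the two LMIs (\ref{eq:LMI1}) and (\ref{eq:LMI(1,1)}) in stabilizing $A_Z$ and $A_K$ respectively) fully explicit.
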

\begin{proof} The result is direct consequence of
parameterization of all stabilizing controllers \cite{ZDG-ROC-96}.
\end{proof}

\section{Stable $\Hi$ controller design for MIMO systems}
The standard $\Hi$ problem is to find a stabilizing controller $K$
such that $\|\mathcal{F}_l(P,K)\|_\infty\leq\gamma$ where
$\gamma>0$ is the closed loop performance level and $P$ is the
generalized plant. It is well known that if two AREs have unique
positive semidefinite solutions and the spectral radius condition
is satisfied, then standard $\Hi$ problem is solvable. All
suboptimal $\Hi$ controllers can be parameterized as
$K=\mathcal{F}_l(M_\infty,Q)$ where the central controller is in
the form \bd M_\infty(s)=\left[\begin{array}{c|cc}
  A_c & B_{c1} & B_{c2} \\
  \hline
  C_{c1} & D_{c11} & D_{c12} \\
  C_{c2} & D_{c21} & 0 \\
\end{array}\right]
\ed and $Q$ is free parameter satisfying $Q\in\mathcal{RH}^\infty$
and $\|Q\|_\infty\leq\gamma$. For derivation and calculation of
$M_\infty$, see \cite{DGKF-TAC-89,ZDG-ROC-96}.

If we consider $M_\infty$ as plant and $\gamma=\gamma_K$, by using
Lemma (\ref{lemma:LMI}), we can find a strictly proper stable
$K_{M_\infty}$ stabilizing $M_\infty$ and resulting stable $\Hi$
controller, $C_{\gamma}= \mathcal{F}_l(M_\infty,K_{M_\infty})$
where $\|K_{M_\infty}\|_\infty<\gamma_K$. If sufficient conditions
(\ref{eq:LMI1}) and (\ref{eq:LMI2}) are satisfied, then
$K_{M_\infty}$ can be written as, \bd
K_{M_\infty}(s)=\left[\begin{array}{c|c}
  A_c-B_{c2}B_{c2}^TX_c+X_{Kc}^{-1}Z_cC_{c2} & -X_{Kc}^{-1}Z_c \\
  \hline
  -B_{c2}^TX_c & 0 \\
\end{array}\right]
\ed and by similarity transformation, we can obtain the state
space realization of $C_\gamma$ as,
\bd
C_\gamma(s)=\left[\begin{array}{c|c}
A_{C_\gamma} & B_{C_\gamma}\\
\hline
C_{C_\gamma} & D_{c11}\\
\end{array}\right]
\ed where $X_c$ is the stabilizing solution of \bd
A_c^TX_c+X_cA_c-X_cB_{c2}B_{c2}^TX_c=0\ed as in (\ref{eq:riccati})
and $X_{Kc}$, $Z_c$ are the solution of (\ref{eq:LMI1}),
(\ref{eq:LMI2}) respectively and the matrices, \bea \nonumber
A_{C_\gamma}&=&\left[\begin{array}{cc}
  A_c-B_{c2}B_{c2}^TX_c & -B_{c2}B_{c2}^TX_c \\
  0 & A_c+X_{Kc}^{-1}Z_cC_{c2} \\
  \end{array}\right] \\
\nonumber B_{C_\gamma}&=&\left[\begin{array}{c}
B_{c1} \\
 -B_{c1}-X_{Kc}^{-1}Z_cD_{c21} \\
\end{array}\right] \\
\nonumber C_{C_\gamma}&=&\left[\begin{array}{cc}
C_{c1}-D_{c12}B_{c2}^T X_c & -D_{c12}B_{c2}^T X_c\\
\end{array}\right]
\eea
Note that $C_\gamma$ is stable stabilizing controller such that
$\|\mathcal{F}_l(P,C_\gamma)\|_\infty<\gamma$.
\section{Numerical examples and comparisons}
\subsection{Strong stabilization} The numerical example is chosen
from \cite{ZO-AUT-00}. In order to see the performance of our
method, we calculated the minimum $\gamma_K$ satisfying the
sufficient conditions in Lemma (\ref{lemma:LMI}) for the following
plants:
\bea
\nonumber G_1(s)&=&\left[\begin{array}{c}
\frac{(s+5)(s-1)(s-5)}{(s+2+j)(s+2-j)(s-\alpha)(s-20)} \\
\\
\frac{(s+1)(s-1)(s-5)}{(s+2+j)(s+2-j)(s-\alpha)(s-20)}
\end{array}\right] \\
\nonumber G_2(s)&=&\left[\begin{array}{c}
\frac{(s+1)(s-2-j\alpha)(s-2+j\alpha)}{(s+2+j)(s+2-j)(s-1)(s-5)} \\
\\
\frac{(s+5)(s-2-j\alpha)(s-2+j\alpha)}{(s+2+j)(s+2-j)(s-1)(s-5)}
\end{array}\right]
\eea
For various $\alpha$ values, the minimum $\gamma_K$
is found. Figure \ref{fig:G1comp} and \ref{fig:G2comp} illustrates
the conservatism of \cite{ZO-AUT-00} mentioned in Remark 2 (where
$\rho_{min}$ is the minimum value of the free parameter $\gamma_K$
corresponding to the method of \cite{ZO-AUT-00}).

\begin{figure}[h!]
\begin{center}
\begin{minipage}[b]{0.5\textwidth}
\centerline{
\includegraphics[width=3in,height=2.5in]{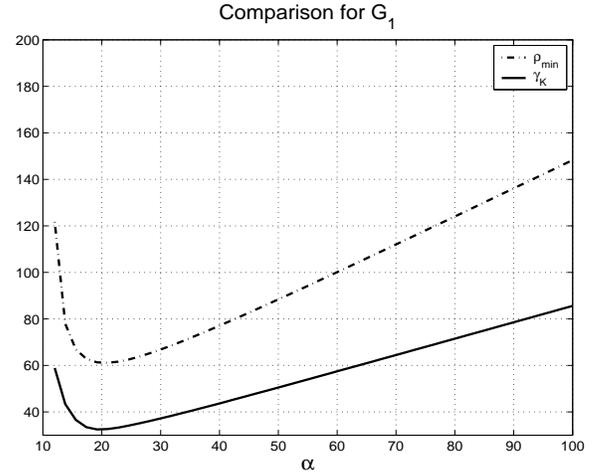}}
\caption{Comparison for plant $G_1$} \label{fig:G1comp}
\end{minipage}%
\end{center}
\end{figure}

\begin{figure}[h!]
\begin{center}
\begin{minipage}[b]{0.5\textwidth}
\centerline{
\includegraphics[width=3in,height=2.5in]{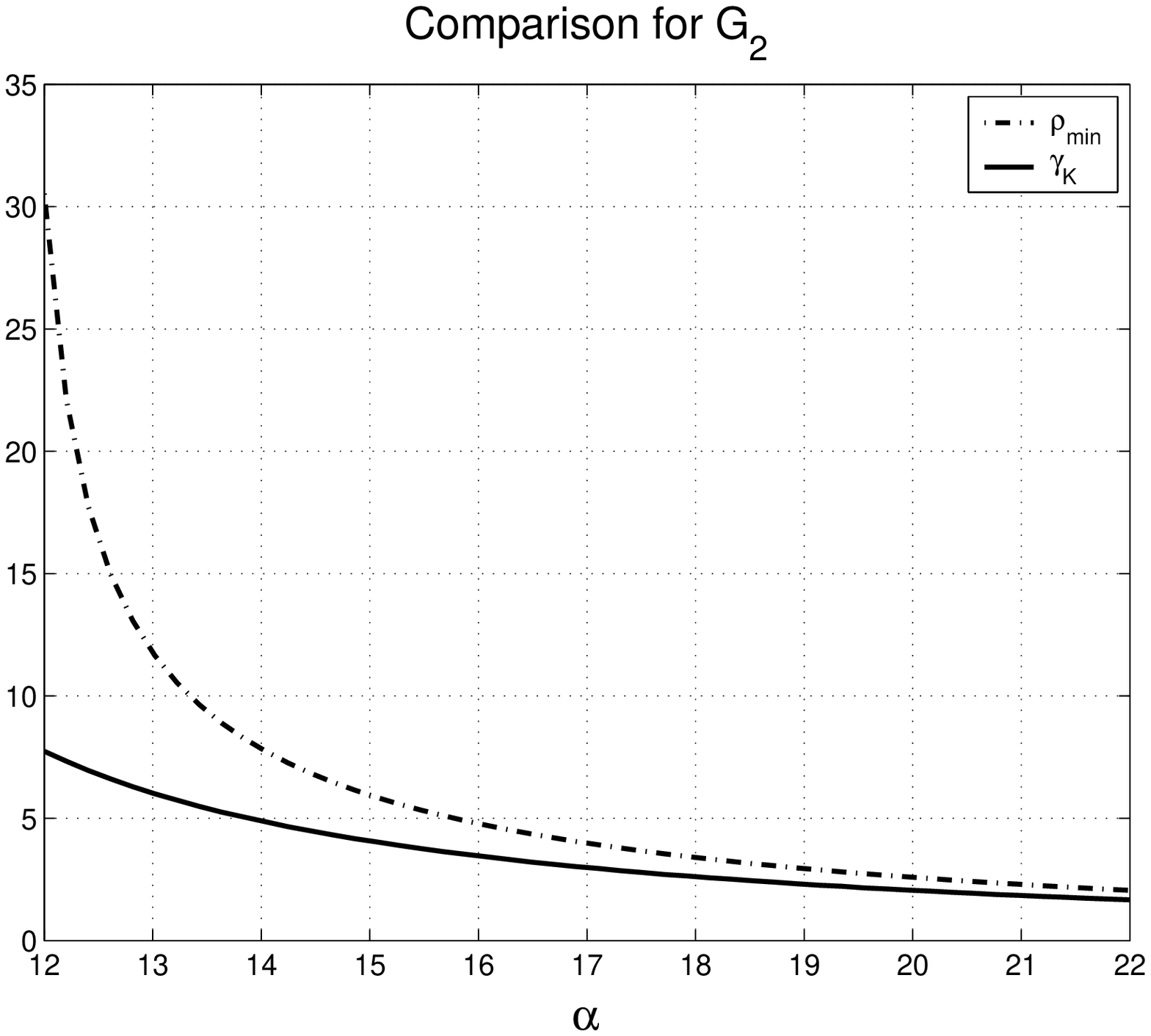}}
\caption{Comparison for plant $G_2$} \label{fig:G2comp}
\end{minipage}
\end{center}
\end{figure}

\subsection{Stable $\Hi$ controllers}

We applied our method to stable $\Hi$ controller design. As a
common benchmark example, the following system is taken from
\cite{LS-SCL-02}: \be P=\left[\begin{array}{c|cc}
  A & B_1 & B_2 \\
  \hline
  C_1 & D_{11} & D_{12} \\
  C_2 & D_{21} & 0 \\
\end{array}\right]
\ee where
\bea
\nonumber A&=& \left[\begin{array}{cc}
    -2 & 1.7321 \\
    1.7321 & 0 \\
  \end{array}\right] \\
\nonumber \left[\begin{array}{c|c} B_1 & B_2 \end{array}\right]&=&\left[\begin{array}{cc|c}
    0.1 & -0.1 & 1\\
    -0.5 & 0.5 & 0\\
  \end{array}\right] \\
\nonumber \left[\begin{array}{c}
C_1 \\
\hline
C_2 \end{array}\right]&=& \left[\begin{array}{cc}
    0.2 & -1 \\
    0 & 0 \\
    \hline
    10 & 11.5470 \\
  \end{array}\right] \\
\nonumber  \left[\begin{array}{c|c}
D_{11} & D_{12}\\
\hline
D_{21} & 0 \\
\end{array}\right]&=&\left[\begin{array}{cc|c}
    0 & 0 & 0 \\
    0 & 0 & 1 \\
    \hline
    0.7071 & 0.7071 & 0 \\
  \end{array}\right]
\eea

The optimal $\gamma$ value for standard $\Hi$ problem is
$\gamma_{opt}=1.2929$. Using the synthesis in \cite{LS-SCL-02}, a
stable $\Hi$ controller is found at $\gamma_{min}=1.36994$. When
our method applied, we reached stable $\Hi$ controller for
$\gamma_{K,min}=1.36957$. Although it seems slight improvement,
our method is much more simpler with help of LMI problem
formulation. Apart from standard problem solution (finding
$M_\infty$), the algorithm in \cite{LS-SCL-02} finds the stable
$\Hi$ controller by solving an additional $\Hi$ problem.

Another common benchmark example (see \cite{CZ-TAC-01} and its
references) is to find a stable $\Hi$ controller for the
generalized plant described by (\ref{BenchExy}).

\begin{figure*}[!t]
\bea \nonumber
z_1&=&\frac{0.03s^7+0.008s^6+0.19s^5+0.037s^4+0.36s^3+0.05s^2+0.18s+0.015}
{s^8+0.161s^7+6s^6+0.582s^5+9.984s^4+0.407s^3+3.9822s^2}(w_1+u), \\
 z_2&=&\beta u,  \nonumber \\
\label{BenchExy}
y&=&w_2+\frac{0.0064s^5+0.0024s^4+0.071s^3+s^2+0.1045s+1}
{s^8+0.161s^7+6s^6+0.582s^5+9.984s^4+0.407s^3+3.9822s^2}(w_1+u).
\eea
\hrulefill
\vspace*{4pt}
\end{figure*}

\begin{table}[b]
\caption{Stable $\Hi$ controller design for (\ref{BenchExy})}
\label{comparison_table}
\begin{center}
\begin{tabular}{|c|c|c|c|c|}
  \hline
  \hline
   & & Gumussoy-Ozbay(GO) & \cite{ZO-TAC-99} & \cite{CZ-TAC-01} \\
  \hline
  $\beta$ & $\gamma_{opt}$ & $\gamma_{GO}$ & $\gamma_{ZO}$ & $\gamma_{CZ}$ \\
  \hline
  $0.1$ & $0.232$ & $0.241$ & $0.245$ & $0.237$ \\
  $0.01$ & $0.142$ & $0.176$ & $0.178$ & $0.151$ \\
  $0.001$ & $0.122$ & $0.170$ & $0.170$ & $0.132$ \\
  \hline
  \hline
\end{tabular}
\end{center}
\end{table}

In \cite{ZO-TAC-99}, it is noted that for this problem, the
sufficient condition in \cite{IOS-TAC-93} is not satisfied for
even large values of $\gamma$ and the method is not applicable. As
we can see from Table \ref{comparison_table}, the performance of
our method is better than the method in \cite{ZO-TAC-99} except
the last case. For all cases, the result of \cite{CZ-TAC-01} is
superior from all other methods. However, the controller order in
\cite{CZ-TAC-01} is $24$ which is greater than our controller
order, $16$.  To address this problem, in \cite{CZ-TAC-01}  a
controller order reduction is performed, that results in lower
order (e.g. 10th order for the case $\beta =0.1$) stable
controllers without significant loss of performance. Furthermore,
the method in \cite{CZ-TAC-01} involves solution of an additional
$\Hi$ problem which is complicated compared to our simple LMI
formulation. If the algorithm in \cite{CZ-TAC-01} fails, selection
of a new parameter $Q$ is suggested which is an ad-hoc procedure.
Although the performance of the controller suggested in the
present paper is slightly worse, it is numerically stable and
easily formulated.

The following example is taken from \cite{CC-TAC-01}. Design a
stable $\Hi$ controller  for the  plant
\bd
P(s)=\frac{s^2+0.1s+0.1}{(s-0.1)(s-1)(s^2+2s+3)}.
\ed 
For the mixed sensitivity minimization problem the weights are
taken to be as in \cite{CC-TAC-01}. A comparison of the methods
\cite{ZO-TAC-99,CC-TAC-01,CWL-CDC-03} and our method can be seen
in Table~\ref{comparison_tableII}. There is a compromise between
the methods. The performance of the method in \cite{CC-TAC-01} is
worse than our method, but the order of  our controller has twice
order of the controller in \cite{CC-TAC-01}. Although the method
in \cite{CWL-CDC-03} gives better results than our method, the
order of the controller in \cite{CWL-CDC-03} is considerably
higher than our controller order. However, this example clearly
shows that our method is superior than \cite{ZO-TAC-99}.

\begin{table}[b]
\caption{Stable $\Hi$ controller design for Example in
\cite{CC-TAC-01}} \label{comparison_tableII}
\begin{center}
\begin{tabular}{|c|c|c|c|c|}
  \hline
  \hline
   & Gumussoy-Ozbay & \cite{ZO-TAC-99} & \cite{CC-TAC-01} & \cite{CWL-CDC-03}, Thm $7$ \\
  \hline
  $\gamma_{min}$ & $32.557$ & $37.551$ & $43.167$ & $21.787$ \\
  Order & $2n$ & $2n$ & $n$ & $3n$ \\
  \hline
  \hline
\end{tabular}
\end{center}
\end{table}

As a remark, the method also gives very good results for
single-input-single-output systems. The following SISO example is
taken from \cite{ZO-AUT-00}:
\bd
P(s)=\frac{(s+5)(s-1)(s-5)}{(s+2+j)(s+2-j)(s-20)(s-30)},
\ed
\bd
W_1(s)=\frac{1}{s+1},
\ed
\bd
W_2(s)=0.2,
\ed
the optimal $\Hi$
problem is defined as \bd \gamma_{opt}=\inf_{K stabilizing
P}\left\|\left[\begin{array}{c}
  W_1(1+PK)^{-1} \\
  W_2K(1+PK)^{-1} \\
\end{array}\right]\right\|_\infty
\ed and the optimal performance for the given data is
$\gamma_{opt}=34.24$. A stable $\Hi$ controller can be found for
$\gamma=42.51$ using the method of \cite{ZO-AUT-00}, whereas our
method, which can be seen as a generalization of \cite{ZO-AUT-00},
gives a stable controller with $\gamma=35.29$. In numerical
simulations, we observed that when $\gamma$ approaches to the
minimum value satisfying sufficient conditions, the solutions of
algebraic riccati equations of \cite{ZO-AUT-00} become numerically
ill-posed. However, the LMI based solution proposed here does not
have such a problem. Same example is considered in
\cite{CZ-TAC-03} and stable $\Hi$ controller found for
$\gamma=34.44$. The method in \cite{CZ-TAC-03} is a two-stage
algorithm with  combination of genetic algorithm and quasi-Newton
algorithm and gives slightly better performance than our method.
The method finds stable $\Hi$ controllers with a selection of
low-order controller for free parameter $Q$. Since the example
considered in the paper is for SISO case, it may be difficult to
achieve good performance with low-order controller for MIMO case.
Due to nonlinear optimization problem structure, the solution of
the method may converge to local minima and in general, genetic
algorithms give solution for longer time.

\section{Concluding Remarks}
In this paper, sufficient conditions for strong stabilization of
MIMO systems are obtained and applied to stable $\Hi$ controller
design. Our conditions are based on linear matrix inequalities
which can be easily solved by the LMI Toolbox of MATLAB. The
method is very efficient from numerical point of view as
demonstrated with examples. The benchmark examples show that the
proposed method is a significant improvement over the existing
techniques available in the literature. The exceptions to this
claim are the methods of \cite{CZ-TAC-01,CWL-CDC-03,CZ-TAC-03}. In
\cite{CZ-TAC-01}, the controller design is based on ad-hoc search
method, and both \cite{CC-TAC-01} and \cite{CWL-CDC-03} result in
higher order controllers than the one designed by our method. In
\cite{CZ-TAC-03}, selection of low-order controller for $Q$ gives
good results for SISO structure of $Q$. However in MIMO structure,
$Q$ may not result in good performance.


\begin{thebibliography}{99}

\bibitem{V-CSS-85}
 M.~Vidyasagar,
 {\it Control System Synthesis: A Factorization
  Approach}, MIT Press, 1985.

\bibitem{ZO-ACC-97}
M.~Zeren and H. \"Ozbay, ``On stable $\Hi$ controller design,''
{\it Proc. American Control Conf.}, pp. 1302--1306, 1997.

\bibitem{YBL-AUT-74}
D.C.~Youla, J.J.~Bongiorno and C.N.~Lu, ``Single-loop feedback
stabilization of linear multivariable dynamical plants,'' {\it
Automatica}, vol.10, pp. 159--173, 1974.


\bibitem{SS-ACC-85}
A.~Sideris and M.~G.~Safonov, ``Infinity-norm optimization  with a
stable controller,'' {\it Proc. American Control Conference}, pp.
804--805, 1985.

\bibitem{B-CDC-96} A.~E.~Barabanov,
``Design of $\Hi$ optimal stable
  controller,'' {\it Proc. Conference on Decision and Control}, pp. 734--738, 1996.

\bibitem{JJA-ACC-90}
M.~Jacobus, M.~Jamshidi, C.~Abdullah, P.~Dorato and D.~Bernstein,
``Suboptimal strong stabilization using fixed-order dynamic
  compensation,'' {\it Proc. American Control Conference,} pp.2659--2660, 1990.

\bibitem{IOS-TAC-93}
H.~Ito, H.~Ohmori and A.~Sano, ``Design of stable controllers
  attaining low $\Hi$ weighted sensitivity,'' {\it IEEE Transactions on
  Automatic Control,} vol.38, pp. 485--488, 1993.

\bibitem{GP-ACC-86}
C.~Ganesh and J.~B.~Pearson, ``Design of optimal control
  systems with stable feedback,'' {\it Proc. American Control Conf.}, pp.
  1969--1973, 1986.

\bibitem{SGP-SCL-97}
A.A.~Saif, D.~Gu and I.~Postlethwaite, ``Strong stabilization of
MIMO systems via $\Hi$ optimization,'' {\it System and Control
Letters,} vol.32, pp. 111--120, 1997.

\bibitem{ZO-TAC-99} M.~Zeren and H.~\"Ozbay,
``On the synthesis of stable $\Hi$
  controllers,'' {\it IEEE Transactions on
  Automatic Control,} vol.44, pp. 431--435, 1999.

\bibitem{ZO-AUT-00} M.~Zeren and H.~\"Ozbay,
``On the strong stabilization and stable $H^\infty$-controller
design problems for MIMO systems,'' {\it Automatica,} vol.36, pp.
1675--1684, 2000.

\bibitem{CZ-TAC-01}
D.~U.~Campos-Delgado and K.~Zhou,
  ``$\Hi$ Strong stabilization,'' {\it IEEE Transactions on
  Automatic Control,} vol.46, pp.1968-1972, 2001.

\bibitem{CC-TAC-01}
Y. Choi and W.K. Chung, ``On the Stable $\Hi$ Controller
Parameterization Under Sufficient Condition,'' {\it IEEE
Transactions on Automatic Control,} vol.46, pp.1618-1623, 2001.

\bibitem{CWL-CDC-03}
Y.S. Chou, T.Z. Wu and J.L. Leu, ``On Strong Stabilization and
$\Hi$ Strong-Stabilization Problems,'' {\it Proc. Conference on
Decision and Control}, pp. 5155--5160, 2003.

\bibitem{LS-SCL-02}
P.H.~Lee and Y.C.~Soh,
  ``Synthesis of stable $\Hi$ controller via the chain scattering framework,''
  {\it System and Control Letters,} vol.46, pp.1968--1972, 2002.

\bibitem{GO-MTNS-04}
S.~G\"um\"u\c{s}soy and H.~\"Ozbay, ``On Stable $\Hi$ Controllers
for Time-Delay Systems,'' in Proceedings of the 16th  {\it
Mathematical Theory of Network and Systems,} Leuven, Belgium, July
2004.
\bibitem{GNLC-LMI-95}
P.~Gahinet, A.~Nemirovski, A.J.~Laub and M.~Chilali, {\it LMI
Toolbox,} 1995 (Natick, MA: The Mathworks).
\bibitem{DGKF-TAC-89}
J.C.~Doyle, K.~Glover, P.P.~Khargonekar and B.A.~Francis,
  ``State-Space solutions to standard $\mathcal{H}^2$ and
  $\mathcal{H}^\infty$ control problems,'' {\it IEEE Transactions on
  Automatic Control,} vol.46, pp. 1968--1972, 1989.
\bibitem{ZDG-ROC-96}
K.~Zhou, J.C.~Doyle and K.~Glover,
 {\it Robust and Optimal Control}, Upper Saddle River: Prentice-Hall, 1996.
\bibitem{CZ-TAC-03}
D.~U.~Campos-Delgado and K.~Zhou, ``A parametric optimization
approach to $\Hi$ and $\mathcal{H}^2$ strong stabilizaiton,'' {\it
Automatica}, vol. 39, No. 7, pp.1205-1211, 2003.
\end{thebibliography}
\end{document}